\newcommand{\ufr}{\mathit{{\it UFR}}}
\newcommand{\llp}{\mathit{{\it LLP}}}
\newcommand{\cp}{\mathit{{\it CP}}}
\newcommand{\Ex}{\mathbb{E}}
\newcommand{\Var}{\mathrm{Var}}
\newcommand{\Cov}{\mathrm{Cov}}
\renewcommand{\phi}{\varphi}
\renewcommand\atop[2]{\genfrac{}{}{0pt}{}{#1}{#2}}
\author{Andreas Lagerås}
\thanks{Andreas Lagerås, AFA Insurance and Stockholm University, Dept.\ Mathematics}
\author{Mathias Lindholm}
\thanks{Mathias Lindholm, Stockholm University, Dept.\ Mathematics}
\title[Smith-Wilson]{Issues with the Smith-Wilson method}
\newtheorem{theorem}{Theorem}
\newtheorem{lemma}{Lemma}
\theoremstyle{remark}
\newtheorem{example}{Example}
\newtheorem*{definition}{Definition}
\begin{document}

\maketitle

\begin{abstract}
The objective of the present paper is to analyse various features of the Smith-Wilson method used for discounting under the EU regulation Solvency II, with special attention to hedging. In particular, we show that all key rate duration hedges of liabilities beyond the Last Liquid Point will be peculiar. Moreover, we show that there is a connection between the occurrence of negative discount factors and singularities in the convergence criterion used to calibrate the model. The main tool used for analysing hedges is a novel stochastic representation of the Smith-Wilson method. Further, we provide necessary conditions needed in order to construct similar, but hedgeable, discount curves.
\vspace{3mm}

\noindent {\it Keywords:} Smith-Wilson, Discount curve, Yield curve, Interpolation, Extrapolation, Hedging, Totally positive matrix, Stochastic process, Solvency II.

\end{abstract}

\section{Introduction}\label{intro}

In the present paper we analyse the mandated method for calculating the basic risk-free interest rate under Solvency II, the so-called Smith-Wilson method. This is an extra- and interpolation method, which is based on a curve fitting procedure applied to bond prices. The technique is described in a research note by Smith and Wilson from 2001, see \cite{SW}. Since \cite{SW} is not publicly available, we have chosen to follow the notation of the European Insurance and Occupational Pensions Authority (EIOPA) given in \cite{E}. The primary aim with the current paper is to present problems with the Smith-Wilson method, especially with regards to hedging interest rate risk. We show analytically that the oscillating behaviour observed numerically by \cite{C} and \cite{V} is always present (Section \ref{hedging}).

Our main theoretical tool is a representation of Smith-Wilson discount factors as expected values of a certain Gaussian process (Section \ref{SW_process_interpretation}). This representation might be useful if one wants to find methods similar to the Smith-Wilson one, without some of its deficiencies (Section \ref{sw_hedgeable}).

With notation from \cite{E}, we have that the discount factor for tenor $t$, when fitted to $N$ prices for zero coupon bonds with tenors $u_1,\dots,u_N$, is
\begin{align}
P(t) &:= e^{-\omega t} + \sum_{j=1}^N\zeta_jW(t,u_j), ~~~t\ge0,\label{hela formeln}
\end{align}
where $\omega := \log(1+\ufr)$, $\ufr$ is the so-called Ultimate Forward Rate, 
\begin{align}
W(t,u_j) :=  e^{-\omega(t+u_j)}\left(\alpha (t\wedge u_j) - e^{-\alpha(t\vee u_j)}\sinh(\alpha(t\wedge u_j))\right),\label{W-funk}
\end{align}
and $\alpha$ is a parameter determining the rate of convergence to the $\ufr$. 

Based on the above it is seen that the $\zeta_j$'s are obtained by solving the linear equation system defined by \eqref{hela formeln} and \eqref{W-funk} given by the specific time points $\{u_j\}_{j=1}^N$. Another name for $u_N$ given in the regulatory framework is the Last Liquid Point ($\llp$), i.e.\ the last tenor of the supporting zero coupon bonds that are provided by the market.

The $\ufr$ is set to 4.2\% for the eurozone. In general, a higher value of $\alpha$ implies faster convergence to $\ufr$. EIOPA \cite[Paragraph 164]{E} has decided that $\alpha$ should be set as small as possible, though with the lower bound 0.05, while ensuring that the forward intensity $f(t) := -\frac{d}{dt}\log P(t)$ differs at most 0.0001 from $\omega$ (defined above) at a certain tenor called the Convergence Point ($\cp$):
\begin{align}\label{conv}
\left|f(\cp) - \omega\right| \le 0.0001.
\end{align}
This optimisation of $\alpha$ can be troublesome to implement numerically since the left hand side of \eqref{conv}, seen as a function of $\alpha$, can have singularities (Section \ref{sing}).

We also point out below, that having the forward yield to converge to a fixed $\ufr$ gives rise to an inconsistency with how the interest rate stress scenarios are specified in Solvency II (Section \ref{parallel}).

The method can also be applied to coupon bearing bonds, or swaps, but there is no loss in generality in considering only zero coupon bonds. The generalisation is particularly simple since a coupon bearing bond can be seen as a linear combination of zero coupon bonds and the Smith-Wilson method is linear in bond prices.

We also note that the market data that is used as input for the Smith-Wilson method should undergo a credit adjustment. This is nothing that we will specify further, but refer the reader to \cite{E} and merely state that this adjustment is of no relevance for the results below. If anything, the variable credit adjustment will make hedging even harder.

\section{Notation}

For later convenience we will here state relevant abbreviations and notation:
\begin{itemize}
\item[$\llp$] is the Last Liquid Point for where the zero coupon bond market support ends.
\item[$\ufr$] is the Ultimate Forward Rate, i.e.\ 4.2\% for most currencies.
\item[$\omega$] is the continuously compounded ultimate forward rate, i.e.\ $\omega = \log(1+\ufr)$.
\item[$\cp$] is the Convergence Point where the $\ufr$ should be reached.
\item[$\alpha$] is the mean reversion parameter that determines the rate of convergence to the $\ufr$.
\item[$\bm{u}$] is a vector with tenors of the market zero coupon bonds.
\item[$\bm{p}$] is a vector of observed zero coupon prices at times to maturities $\bm{u$}, that is $\bm{p}=(p_1,\dots,p_{\llp})'$.
\item[$\bm{r}$] is a vector of observed zero coupon spot rates at times to maturities $\bm{u$}, that is $\bm{r}=(r_1,\dots,r_{\llp})'$, i.e.\ $\bm{p}_i := \bm{p(r)}_i = e^{-r_iu_i}$.
\item[$H(s,t)$] is the following function:
\[H(s,t) := \alpha (s\wedge t) - e^{-\alpha (s\vee t)}\sinh(\alpha (s\wedge t)).\]
\item[$W(s,t)$] is defined as $W(s,t) := e^{-\omega(s+t)}H(s,t)$. 
\item[$\bm{Q}$] is a diagonal matrix with $\bm{Q}_{ii} = e^{-\omega u_i} =: \bm{q}_i, i = 1,\ldots,\llp$.
\item[$\bm{H}$] is a matrix with elements $\bm{H}_{ij} = (H(u_i,u_j))_{ij}$.
\item[$\bm{W}$] is a matrix with elements $\bm{W}_{ij} = (W(u_i,u_j))_{ij}$. Note that $\bm{W} = \bm{QHQ}$.
\item[$W(t,\bm{u})$] is defined as $W(t,\bm{u}) := (W(t,u_1),\ldots,W(t,u_\llp))'$. $H(t,\bm{u})$ is defined analogously.
\item[$\bm{b}$] is the solution to the equation $\bm{p} = \bm{q} + \bm{QHQb}$ (note that this is the zero coupon case).
\item[$\sinh {[}\alpha \bm{u}'{]}$] denotes $\sinh(\,\cdot\,)$ applied component-wise to the vector $\alpha \bm{u}'$.
\item[$P(t)$] is the discount function at $t$ that surpress the dependence on the market support, i.e.\ $P(t) := P(t;\bm{p}(\bm{r})) \equiv P(t;\bm{p}) \equiv P(t;\bm{r})$.
\item[$P^c$] is the present value of the cash flow $\bm{c}$ w.r.t.\ Smith-Wilson discounting using $P(t)$, i.e.\ $P^c := \sum_t c_tP(t)$. Hence, $P^c := P^c(t;\bm{p}(\bm{r})) \equiv P^c(t;\bm{p}) \equiv P^c(t;\bm{r})$.
\end{itemize}

\section{Representing Smith-Wilson discount factors}\label{SW_process_interpretation}

The problems with the Smith-Wilson method that will be highlighted in later sections are centered around problems regarding hedging. In order to understand this in more detail we have found that the representation of the method from \cite{AL} and \cite{L} will prove useful. We will now give a full account of how the extrapolated discount factors of the Smith-Wilson method can be treated as an expected value of a certain stochastic process:

Let $\{X_t: t\geq 0\}$ be an Ornstein-Uhlenbeck process with $dX_t = -\alpha X_t dt + \alpha^{3/2}dB_t$, where $\alpha>0$ is a mean reversion parameter, and $X_0\sim N(0,\alpha^2)$ independent of $B$, and let $\bar X_t := \int_0^t X_s ds$ and $Y_t := e^{-\omega t}(1+\bar X_t)$. Given this we can state the following theorem:

\begin{theorem}\label{thm_repr}
$P(t) = \Ex[Y_t|Y_{u_i} = p_{u_i}; i=1,\dots,N]$.
\end{theorem}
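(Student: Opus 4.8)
The plan is to identify $\{Y_t:t\ge0\}$ as a Gaussian process with mean function $t\mapsto e^{-\omega t}$ and covariance function $(s,t)\mapsto W(s,t)$, and then to read off the theorem from the standard formula for the conditional expectation of one coordinate of a jointly Gaussian vector given the others. To begin, $\{X_t\}$ is Gaussian, being an Ornstein--Uhlenbeck process driven by $B$ and started from an independent Gaussian value; its paths are continuous, so $\bar X_t=\int_0^tX_s\,ds$ is a well-defined Gaussian process, and hence so is $Y_t=e^{-\omega t}(1+\bar X_t)$, which is a deterministic affine transformation of $\bar X$. From $\Ex[X_0]=0$ and $X_t=e^{-\alpha t}X_0+\alpha^{3/2}\int_0^te^{-\alpha(t-r)}\,dB_r$ one gets $\Ex[X_t]=0$, hence $\Ex[\bar X_t]=0$ and $\Ex[Y_t]=e^{-\omega t}$, which is exactly the leading term of \eqref{hela formeln}.

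The crux is the covariance computation. Using the representation of $X_t$ above, the independence of $X_0$ and $B$, $\Var(X_0)=\alpha^2$ and the It\^o isometry one obtains $\Cov(X_u,X_v)=\tfrac{\alpha^2}{2}\bigl(e^{-\alpha|u-v|}+e^{-\alpha(u+v)}\bigr)$; note that the diffusion coefficient $\alpha^{3/2}$ and the initial variance $\alpha^2$ have been chosen precisely so that the two resulting terms carry the common factor $\tfrac{\alpha^2}{2}$. Then
\[
\Cov(\bar X_s,\bar X_t)=\int_0^s\!\!\int_0^t\Cov(X_u,X_v)\,du\,dv ,
\]
and evaluating this double integral --- splitting $\{u\le v\}$ from $\{u>v\}$ for the $e^{-\alpha|u-v|}$ part, the $e^{-\alpha(u+v)}$ part factorising into a product of single integrals --- produces several exponential terms that cancel, leaving $\alpha(s\wedge t)-e^{-\alpha(s\vee t)}\sinh(\alpha(s\wedge t))=H(s,t)$. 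Multiplying by the deterministic factor $e^{-\omega(s+t)}$ gives $\Cov(Y_s,Y_t)=W(s,t)$. I expect this to be the main obstacle: the computation is elementary but it is the only place where the specific constants enter, and the bookkeeping in the double integral has to be carried out with some care for the simplification to go through.

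It remains to invoke Gaussian conditioning. The joint law of $(Y_t,Y_{u_1},\dots,Y_{u_N})$ is Gaussian with $\Ex[Y_{u_i}]=\bm q_i$, $\Cov(Y_{u_i},Y_{u_j})=\bm W_{ij}$ and $\Cov(Y_t,Y_{u_i})=W(t,u_i)$, so the covariance vector of $Y_t$ against $(Y_{u_1},\dots,Y_{u_N})$ is $W(t,\bm u)$ and the covariance matrix of $(Y_{u_1},\dots,Y_{u_N})$ is $\bm W=\bm{QHQ}$. Since $\bm W$ is invertible for distinct positive tenors, the conditional-expectation formula for Gaussian vectors gives
\[
\Ex[Y_t\mid Y_{u_i}=p_{u_i};\ i=1,\dots,N]=e^{-\omega t}+W(t,\bm u)'\bm W^{-1}(\bm p-\bm q).
\]
On the other hand, the coefficients $\zeta_j$ in \eqref{hela formeln} are determined by $P(u_i)=p_{u_i}$, $i=1,\dots,N$, which in matrix form reads $\bm p=\bm q+\bm W\bm\zeta$, i.e.\ $\bm\zeta=\bm W^{-1}(\bm p-\bm q)=\bm b$; hence $P(t)=e^{-\omega t}+W(t,\bm u)'\bm\zeta$ coincides with the right-hand side above, which proves the theorem.
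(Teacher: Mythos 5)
Your proposal is correct and follows essentially the same route as the paper: compute $\Ex[Y_t]=e^{-\omega t}$, show $\Cov(Y_s,Y_t)=W(s,t)$ via the Ornstein--Uhlenbeck covariance (your form $\tfrac{\alpha^2}{2}(e^{-\alpha|u-v|}+e^{-\alpha(u+v)})$ is identical to the paper's $\alpha^2e^{-\alpha(s\vee t)}\cosh(\alpha(s\wedge t))$) integrated twice, and then apply the Gaussian conditioning formula and identify $\bm\zeta=\bm W^{-1}(\bm p-\bm q)$ with the coefficients defined by the interpolation conditions. The only difference is that you sketch rather than fully carry out the double integral, but the stated outcome $H(s,t)=\alpha(s\wedge t)-e^{-\alpha(s\vee t)}\sinh(\alpha(s\wedge t))$ is correct.
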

In other words: the Smith-Wilson bond price function can be interpreted as the conditional expected value of a certain {\it non-stationary} Gaussian process. Note that $\alpha$ will govern {\it both} mean reversion and volatility.

Since $\{Y_t: t \geq 0\}$ is a Gaussian process we have that $P(t)$, being a conditional expected value, is an affine function of $\bm{p}$:
\begin{equation}\label{P_linj}
P(t) = \Ex[Y_t] + \Cov[Y_t,\bm{Y}]\Cov[\bm{Y},\bm{Y}]^{-1}(\bm{p}-\Ex[\bm{Y}]) =:  \beta_0 + \boldsymbol{\beta}'\bm{p},
\end{equation}
where $\beta_0$ and $\boldsymbol{\beta}$ are functions of $t$, but not $\bm{p}$, if $\alpha$ is considered a fixed parameter. If $\alpha$ is set by the convergence criterion, $\beta_0$ and $\boldsymbol{\beta}$ are functions of $\bm{p}$.

The main aim of this paper is to analytically show problems inherent in the Smith-Wilson method which will affect hedging of liabilities. From this perspective it is evident that the re-formulation of the bond price function according to Equation \eqref{P_linj} will prove useful, and in particular the behaviour of the $\beta$'s will be of interest:
\begin{theorem}\label{thm_signs}
If $t>u_N$, $\mathrm{sign}(\beta_i)=(-1)^{N-i}$ for $i=1,\dots, N$.
\end{theorem}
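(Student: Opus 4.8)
The plan is to reduce the claim to classical facts about totally positive matrices. First I would make the $\beta$'s explicit. From \eqref{hela formeln}, $P(t)=e^{-\omega t}+W(t,\bm u)'\bm b$ with $\bm b=\bm W^{-1}(\bm p-\bm q)$, so comparison with \eqref{P_linj} gives $\boldsymbol\beta=\bm W^{-1}W(t,\bm u)$; using $\bm W=\bm Q\bm H\bm Q$ and $W(t,\bm u)=e^{-\omega t}\bm Q\,H(t,\bm u)$ this becomes
\begin{equation*}
\boldsymbol\beta=e^{-\omega t}\,\bm Q^{-1}\bm H^{-1}H(t,\bm u).
\end{equation*}
Since $e^{-\omega t}>0$ and $\bm Q^{-1}$ is diagonal with strictly positive entries, $\sign(\beta_i)=\sign\!\big((\bm H^{-1}H(t,\bm u))_i\big)$, and by Cramer's rule $(\bm H^{-1}H(t,\bm u))_i=\det\bm H^{(i)}/\det\bm H$, where $\bm H^{(i)}$ is $\bm H$ with its $i$th column replaced by $H(t,\bm u)$.

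Next I would read both determinants as minors of a single larger matrix. Since $t>u_N$, adjoin $t$ as an extra node: let $\tilde{\bm H}=(H(s_a,s_b))_{a,b=1}^{N+1}$ with $s_1:=u_1<\dots<s_N:=u_N<s_{N+1}:=t$, so that $\bm H$ is the leading $N\times N$ block of $\tilde{\bm H}$ and $H(t,\bm u)$ is the vector of the first $N$ entries of its last column. The columns of $\bm H^{(i)}$, read left to right, carry the node labels $u_1,\dots,u_{i-1},t,u_{i+1},\dots,u_N$; because $t$ exceeds every $u_j$, moving that column to the rightmost position so that the labels become increasing costs exactly $N-i$ adjacent transpositions, whence $\det\bm H^{(i)}=(-1)^{N-i}M_i$, where $M_i$ is the $N\times N$ minor of $\tilde{\bm H}$ on rows $\{1,\dots,N\}$ and columns $\{1,\dots,N{+}1\}\setminus\{i\}$. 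So it remains to show $\det\bm H>0$ and $M_i>0$, i.e.\ that every such minor of $\tilde{\bm H}$ is strictly positive.

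For this I would invoke the Gaussian representation of Theorem~\ref{thm_repr}. A direct computation from the Ornstein--Uhlenbeck dynamics gives, for the kernel $(u,v)\mapsto\Cov[X_u,X_v]$, the closed form $\alpha^2 e^{-\alpha(u\vee v)}\cosh(\alpha(u\wedge v))$, i.e.\ a ``Green's function'' kernel $\alpha^2\phi(u\wedge v)\psi(u\vee v)$ with $\phi(x)=\cosh(\alpha x)>0$, $\psi(x)=e^{-\alpha x}>0$ and $\phi/\psi$ strictly increasing; such kernels are totally positive. By Theorem~\ref{thm_repr}, $H(s,t)=\Cov[\bar X_s,\bar X_t]=\int_0^s\!\int_0^t\Cov[X_u,X_v]\,du\,dv$, that is, $H$ is obtained from the kernel of $X$ by composing on each side with the totally positive kernel $(x,y)\mapsto\mathbbm 1\{y\le x\}$; by the basic composition (Cauchy--Binet) formula $H$ is again totally positive, and since that formula writes each minor of $H$ as an integral of a nonnegative, not identically vanishing product of minors, every minor of $H$ at distinct positive nodes is in fact strictly positive. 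In particular $\det\bm H>0$ and $M_i>0$, and hence $\sign(\beta_i)=\sign\!\big((-1)^{N-i}M_i/\det\bm H\big)=(-1)^{N-i}$.

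The hard part is this last step: promoting \emph{total} positivity of $H$ to \emph{strict} positivity of the specific minors $\det\bm H$ and $M_i$, since only strict inequalities force the exact alternating pattern (total positivity alone only gives $\sign(\beta_i)\in\{0,(-1)^{N-i}\}$). Total positivity of the kernel of $X$ and its preservation under composition are standard; the delicate point is that the double integration leaves none of these minors zero — equivalently, that $\bar X_{u_1},\dots,\bar X_{u_N},\bar X_t$ have a nondegenerate covariance structure of the required total-positivity order. A convenient way around this is to note that $H$ is real-analytic in its nodes, so a minor vanishing for one choice of distinct, increasing nodes would vanish identically, which it manifestly does not.
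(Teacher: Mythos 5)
Your reduction is sound and follows essentially the same route as the paper's proof: you obtain $\boldsymbol\beta = e^{-\omega t}\,\bm Q^{-1}\bm H^{-1}H(t,\bm u)$, apply Cramer's rule, extract the factor $(-1)^{N-i}$ by reordering columns, and identify the remaining determinant as a minor of the augmented kernel matrix $\tilde{\bm H}$ at the nodes $u_1<\dots<u_N<t$; likewise your total positivity argument for the Ornstein--Uhlenbeck covariance $K(u,v)=\alpha^2e^{-\alpha(u\vee v)}\cosh(\alpha(u\wedge v))$ (a kernel $\phi(u\wedge v)\psi(u\vee v)$ with $\phi/\psi$ increasing) and the passage to $H$ by composing with indicator kernels via the continuous Cauchy--Binet formula are exactly the paper's Lemma \ref{lemma_tp} and the first half of its proof of Theorem \ref{thm_signs}. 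Everything up to the strictness of the minors $M_i$ and $\det\bm H$ is correct and matches the paper.

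The gap is precisely at the step you yourself flag as the hard part, and the patch you propose does not work. A real-analytic function of several variables that is not identically zero can still vanish at individual points, indeed on entire hypersurfaces (e.g.\ $f(x,y)=x-y$); the identity theorem only excludes vanishing on a set with non-empty interior. So ``a minor vanishing for one choice of distinct, increasing nodes would vanish identically'' is false, and nothing in your argument rules out $M_i=0$ at the particular nodes at hand, which would leave only $\sign(\beta_i)\in\{0,(-1)^{N-i}\}$. (A secondary issue: $H(s,t)$ involves $s\wedge t$ and $s\vee t$ and is not jointly real-analytic across the diagonal, so even the premise of the patch needs care.) The paper closes this gap constructively: after Cauchy--Binet, the minor $M_i$ becomes an integral of the continuous, non-negative function $(\bm v,\bm w)\mapsto K\left(\begin{smallmatrix}v_1&\dots&v_N\\ w_1&\dots&w_N\end{smallmatrix}\right)$ over an explicit box $A$ carved out by the indicator kernels; the point $\bm v=\bm w=(u_1,\dots,u_N)$ lies on $\partial A$, and there the integrand equals $\det\Cov[\bm X,\bm X]>0$ by non-degeneracy of the Ornstein--Uhlenbeck process at distinct times. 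A short lemma (a continuous non-negative function that is positive somewhere on the boundary of a box has positive integral over the box) then gives $M_i>0$. You need to supply this, or an equivalent strictness argument, to force the exact alternating sign pattern.
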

This has peculiar consequences for hedging interest rate risk. The proofs of Theorem \ref{thm_repr} and \ref{thm_signs} are given in Section \ref{proofs}.

\section{Problems with the Smith-Wilson method}

There are a number of problems with the Smith-Wilson method. Some of these were known early and can be found in \cite{E2}. Here we list some of the problems, and we start with a serious one regarding hedging.

\subsection{Hedging}\label{hedging}

If you have a liability, i.e.\ a debt, of 1 unit of currency with tenor $t$, its market value is that of a default free zero coupon bond with the same tenor, since if you buy the zero coupon bond you know that you will be able to pay your debt no matter what happens with interest rates. This is the essence of market valuation of liabilities and also of hedging. If you have liabilities with several tenors, you could theoretically match them by buying the zero coupon bonds with the same tenors.

However, in practise, there are not enough bonds available for longer tenors. This is one reason for the need of an extrapolation method such as Smith-Wilson so that liabilities with large tenors are priced with a model rather than the non-existent market.

The Smith-Wilson method interpolates market prices for a given set $u_1,\dots,u_N$, i.e.\ the model price equals the market price for these tenors. Liabilities with tenors in this set could therefore be perfectly hedged by buying the corresponding amounts of zero coupon bonds.

Liabilities with tenors outside the set of market tenors, have present values that are truly model based. Consider a liability of 1 unit of currency with tenor $t$ which has present value $P(t)$ according to the Smith-Wilson method.

By Equation \eqref{P_linj}, $P(t)$ is affine in the prices of the zero coupon bonds for tenors $u_1,\dots,u_N$. Thus, Equation \eqref{P_linj} gives the recipe for a ``perfect'' hedge: own $\beta_i$ units of the zero coupon bond with time to maturity $u_i$ for $i=1,\dots,N$, and have the amount $\beta_0$ in cash. However the values of the zero coupon bonds fluctuate, the combined portfolio will have the value $P(t)$.\footnote{This only holds assuming $\alpha$ is constant. If the changing market prices force a change of $\alpha$ the hedge is no longer perfect. For moderate changes of the yield curve, the hedge should still perform well.} 

The Smith-Wilson method is most easily expressed in terms of prices rather than yields, but we note that the idea of matching $\beta$'s is essentially the same thing as matching \emph{key rate durations} where the key rates are all market rates used to construct the Smith-Wilson curve. Key rate durations, or rather dollar values of a basis point (which go by abbreviations such as BPV or DV01), are the preferred measure of interest sensitivities and hedge construction in \cite{V} and \cite{C}.

Alas, by Theorem \ref{thm_signs}, the recipe for the perfect hedge is quite strange when you actually try to procure the ingredients: If $t>u_N$ you will need a positive amount of the zero coupon bond at $u_N$, a \emph{negative} amount of the zero coupon bond at $u_{N-1}$, then again a positive amount of the one at $u_{N-2}$, etc. 

This oscillating behaviour was observed empirically for some tested yield curves by \cite{V} and \cite{C}, but as Theorem \ref{thm_signs} show that the pattern is present for \emph{all} Smith-Wilson curves without exception. 
 	
It is also worth noting that a hedge constructed according to the above procedure will due to the sign-changes have a sum of the absolute values of the exposures that is {\it larger} than the present value of the liabilities that one wants to hedge.

Another issue with the hedge is that as soon as time passes, say with a month $\Delta t$, $P(t)$ is calculated with new zero coupon bonds at $u_1,\dots,u_N$, whereas our portfolio has bonds with maturities $u_1-\Delta t, \dots, u_N -\Delta t$. If the signs of holding amounts had all been positive, this might not had been such a big practical issue. In that case one could conceivably have changed the portfolio weights little by little and still have had an acceptable hedge.

Theorem \ref{thm_signs} implies that whatever holding one has with a particular maturity date must be sold and changed into the opposite exposure in the time span of one year in the case of when all time to maturities $u_1,\dots,u_N$ are one year apart. For many currencies $u_N - u_{N-1} = 5$ years. Even in this case the turnover would be impractically large.

\begin{example}\label{mat30}
Consider an initial EUR curve with observed market rates for tenors 1, 2, \dots,  10, 12, 15, and 20 all equal to 4.2\%. This means that the Smith-Wilson curve is flat at 4.2\% for all tenors. A liability of 100 EUR with maturity 30 years, i.e.\ in the extrapolated part of the curve will have the present value $100\cdot 1.042^{-30} \doteq 29$. The discount factor $P(30) =  \beta_0 + \boldsymbol{\beta}'\bm{p}$ where $\beta_0=0$, $\beta_i \doteq 0.00$ for $i=1,\dots,6$, and
\vspace{2mm}

{\footnotesize
\begin{tabular}{r|rrrrrrr}
$i$ & 7 & 8 & 9 & 10 & 12 & 15 & 20 \\
\hline
$\beta_i$ & 0.01 & -0.05 & 0.19 & -0.38 & 0.76 & -1.64 & 1.96 \\
$\beta_ip_i$ & 1 & -3 & 13 & -26 & 47 & -88 & 86 
\end{tabular}
}
\vspace{1mm}

Note that the positions in the zero coupon bonds at the last three maturities: 46, -88 and 86, are considerably larger in absolute value than the present value 29 of the liability.
\end{example}

In general the liabilities of an insurance company do not all come due the same date, and if we have undiscounted liabilities $c_t$ with time to maturity $t=1,2\dots$, the present value of all liabilities is $P^c=\sum_t c_tP(t)$. One can still find the hedge by Equation \eqref{P_linj} since $\sum_t c_tP(t) =  \sum_tc_t\beta_0(t) + \sum_tc_t\boldsymbol{\beta}(t)'\bm{p} =: \beta_0^c + {\boldsymbol{\beta}^c}'\bm{p}$.

\begin{example}
Consider the same curve as is Example \ref{mat30}, but let the liability cash flow equal $10/ 1.10^k$ for $k=1,2,\dots$. The undiscounted value of the liabilities is 100 EUR and the present value is 68. The weighted average time to maturity is 11 years and the Macaulay duration is about 7 years. The present value of liabilities discounted by extrapolated yields, i.e.\ longer than 20 years, is 4.47 which is less than 7\% of the total present value. These liabilities will contribute to the oscillating behaviour of the hedge. A priori one might think that the amount is such a small part of the total that the hedge of the overall liabilities would consist of only positive exposures, but then one would be mistaken, since the penultimate position is negative:
\vspace{1mm}

{\footnotesize
\begin{tabular}{r|rrrrrrrrrrrrr}
$i$ & 1 & 2 & 3 & 4 & 5 & 6 & 7 & 8 & 9 & 10 & 12 & 15 & 20 \\
\hline
$\beta_i^c$    & 9 & 8 & 8 & 7 & 6 & 6 & 5 & 4 & 5 & 2 & 15 & -8 & 29 \\
$\beta_i^cp_i$ & 9 & 8 & 7 & 6 & 5 & 4 & 4 & 3 & 4 & 1 &  9 & -4 & 13
\end{tabular}
}

We note that this pattern is also seen in \cite[Fig.\ 2]{C}.
\end{example}

\subsection{Negative discount factors}\label{sec_neg_df}

Discount factors extrapolated by the Smith-Wilson method may become negative when the market curve has a steep slope for high tenors, i.e.\ the last market forward rates are high. This has been noted by supervisory authorities, e.g.\ \cite{E2,FT}.

\begin{example}\label{ex_neg_df} A simple, not completely unrealistic example, is market rates $r_t = t$\% for tenors $t=1$, 2, \dots,  10, 12, 15 and 20. With $\alpha = 0.22$ we have convergence in the sense of Equation \eqref{conv} at $\mathit{CP} = 60$, and negative discount factors for all tenors larger than 24 years.
\end{example}

This is nonsense and clearly very undesirable. A single set of market inputs and Smith-Wilson curve output may be checked manually, but when market rates are simulated or drawn from an Economic Scenario Generator, and the Smith-Wilson method is applied to them, checks must be automated. EIOPA does not specify how to amend the method when discount factors become negative. 

One solution is to increase $\alpha$ even further. In Example \ref{ex_neg_df} above, it suffices to increase $\alpha$ to 0.32 to avoid negative discount factors. This is nothing strange, since an increase in $\alpha$ corresponds to an increase in the speed of mean reversion, and hence an increase in the the stiffness of the curve. In order to see that this is always possible, one can argue as follows: for large values of $\alpha$ and $t \ge s$ the function $W(s,t) \sim \alpha (s \wedge t)$, which corresponds to the covariance function of Brownian motion, implying that the conditional expected value, i.e.\ the discount function at $t$, will be close to $e^{-\omega t}$. Since this holds for any $t\ge s$, it will in particular hold for $t = \cp \ge u_i$. Thus, increasing $\alpha$ will eventually make the discount factors become positive.

Note that there is no contradiction between negative discount factors and convergence of continuously compounded forward rates, i.e.\ the forward intensities. This because the latter are gradients of the corresponding bond prices, or alternatively put, discount factors. The problem with negative discount factors is rather that they can not be represented as any {\it real}, as a converse to imaginary, spot rates.

\subsection{Tricky to find \texorpdfstring{$\bm{\alpha}$}{alpha}: negative discount factors revisited}\label{sing}

Another problem with the calibration of $\alpha$ is that there may be singularities in the domain where $\alpha$ is optimised:

\begin{example}\label{ex_alpha_singularity}
Consider a parametrisation according to the Swedish market, i.e.\ $\llp = 10, \cp = 20$ and $\omega = \log(1+4.2\%)$ together with the following zero coupon spot rates: 2\%, 2.2\%, 2.4\%, 3\%, 3.2\%, 4\%, 5\%, 6\%, 6.25\%, 7.5\%, defined for maturities 1 to 10. In Figure \ref{fig_alpha_singularity} it is clear that there is a singularity in terms of the calibration criterion defined by EIOPA in \cite[Paragraphs 160--166]{E}.
\end{example}

To understand this better, see to the convergence criterion defined by EIOPA, i.e.\ \cite[Paragraphs 160--166]{E}, that can be expressed as
\begin{align}
g(\alpha) := |h(\alpha)| = |f(\cp) - \omega| = \frac{\alpha}{|1-\kappa e^{\alpha\cp}|},\label{conv g}
\end{align}
where 
\begin{align}
\kappa := \frac{1+\alpha \bm{u}' \bm{Qb}}{\sinh[\alpha \bm{u}'] \bm{Qb}}.
\end{align}
In order for a singularity to arise it is hence necessary that
\[
1-\kappa e^{\alpha\cp} \equiv 0,
\]
which is equivalent to
\[
1 + (\alpha \bm{u}' -  e^{-\alpha\cp} \sinh[\alpha \bm{u}'] )\bm{Qb} = 0.
\]
Now note that
\[
P(\cp) = e^{-\omega \cp} \left(1 + (\alpha \bm{u}' -  e^{-\alpha\cp} \sinh[\alpha \bm{u}'] )\bm{Qb}\right),
\]
that is $g(\alpha)$ from \eqref{conv g} is only singular iff $P(\cp) \equiv 0$. Moreover, it holds that $h(\alpha)$ from $\eqref{conv g}$ satisfy $h(\alpha) \le 0$ iff $P(\cp) \le 0$ and $h(\alpha) > 0$ iff $P(\cp) > 0$. Consequently a singularity in the domain where $\alpha$ is optimised can only occur if the input market spot curve will result in a singularity for some of the eligible $\alpha$ values, i.e.\ for some $\alpha \ge 0.05$ according to EIOPA's specification. 

To conclude, from the previous Section \ref{sec_neg_df} we know that there may be situations when we have convergence according to $g(\alpha)$ from \eqref{conv g}, but where the resulting discount factors are negative. We have now learned that if we want to solve this situation by increasing $\alpha$, which we know will work, we need to pass a singularity. Furthermore, we now know that even if the optimisation algorithm will not end up in this pathological situation, as soon as the input market spot rate may give rise to negative discount factors for {\it some} $\alpha \ge 0.05$, there will be a singularity that must be avoided by the optimisation algorithm that one uses.

\begin{figure}
\includegraphics[width=0.5\textwidth]{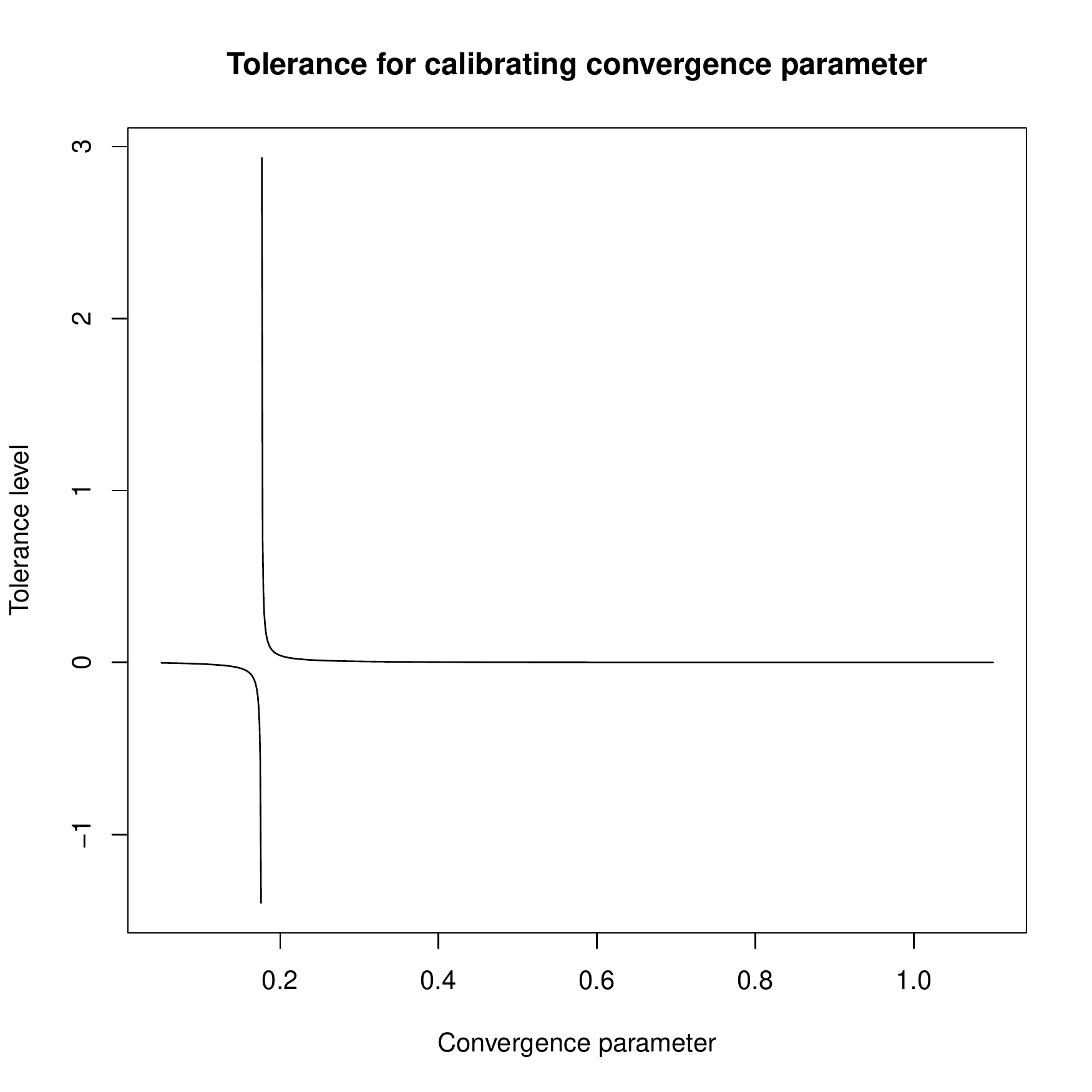}
\caption{Plot of tolerance function $h(\alpha)$ from \eqref{conv g} used for calibrating $\alpha$ for parameters according to Example \ref{ex_alpha_singularity}.}\label{fig_alpha_singularity}
\end{figure}

\subsection{What do the EIOPA stresses really mean?}\label{parallel}

In the Solvency II regulation the interest rate stress is defined as follows: let $r_t$ denote the basic risk-free spot rate for a zero coupon bond with $t$ time units to maturity, and let $r_t^s$ denote its stressed counterpart. According to the regulation the stressed spot rate is given by
\[
r_t^s := r_t(1+s_t),
\]
where $s_t$ is a pre-specified constant that is positive or negative depending on whether one considers increasing or decreasing interest rates. The shift is at least $\pm 0.20$ (for tenors longer than 90 years). For long tenors, such as the $\cp$, the shift is essentially parallel, i.e.\ it does not vary much with $t$. This means that the forward rate essentially shifts with the same factor as the rate, and in particular the forward rate at the $\cp$ shifts with a substantial amount: approximately more than $\pm0.2\cdot 4.2\%$. This goes against the whole idea of the ultimate forward rate being a \emph{constant}.

Note that there is no easy way to fix this: One alternative could be to only stress the spot rates up until the $\llp$ and thereafter re-calibrate $\alpha$ with an unchanged $\ufr$. Another alternative could be to instead stress all forward rates up until the $\ufr$, and thereafter re-calculate the implied spot rates, hence without the need of re-calibrating $\alpha$.

\section{Proofs}\label{proofs}

\begin{proof}[Proof of Theorem \ref{thm_repr}]
First note that the $Y_t$ has mean function $\Ex[Y_t] = e^{-\omega t}(1 + \Ex[\bar X_t]) = e^{-\omega t}$. We intend to show that $Y_t$ has the Wilson function $W(s,t)$ as its covariance function. Let $0\leq s\leq t$. The non-stationary Ornstein-Uhlenbeck process $X_t = X_0e^{-\alpha t} + \alpha^{3/2}\int_0^t e^{-\alpha(t-s)}dB_s$ has
\begin{align*}
\Cov[X_s,X_t] &=: K(s,t) \\
&= e^{-\alpha(s+t)}\Cov\left[X_0 + \alpha^{3/2}\int_0^s e^{\alpha u}dB_u, X_0 + \alpha^{3/2}\int_0^t e^{\alpha u}dB_u\right] \\
&= e^{-\alpha(s+t)}\left(\Var[X_0] + \alpha^3\int_0^s e^{2\alpha u}du\right) \\
&= e^{-\alpha(s+t)}\left(\alpha^2 + \frac{\alpha^2}{2}(e^{2\alpha s}-1)\right) \\
&= \alpha^2 e^{-\alpha t}\cosh(\alpha s).
\end{align*}
The integrated process $\bar X_t$ thus has,
\begin{align*}
\Cov[\bar X_s,\bar X_t] &=: H(s,t) = \iint\limits_{\atop{0\leq u\leq s}{0\leq v\leq t}}K(u,v)\,dudv \\
&= \bigg(\iint\limits_{0\leq u\leq v \leq s}+\iint\limits_{0\leq v \leq u\leq s}+\iint\limits_{0\leq u\leq s \leq v\leq t}\bigg)K(u,v)\,dudv \\
&= \bigg(2\iint\limits_{0\leq u\leq v \leq s}+\iint\limits_{0\leq u\leq s \leq v\leq t}\bigg)K(u,v)\,dudv \\
&= 2\int_0^s\alpha e^{-\alpha v}\bigg(\int_0^v\alpha\cosh(\alpha u)\,du\bigg)\,dv + \int_s^t\alpha e^{-\alpha v}dv\int_0^s\alpha\cosh(\alpha u)\,du \\
&= 2\int_0^s\alpha e^{-\alpha v}\sinh(\alpha v)\,dv + (e^{-\alpha s}-e^{-\alpha t})\sinh(\alpha s)\\
&= \alpha s - e^{-\alpha t}\sinh(\alpha s),
\end{align*}
and we arrive at the covariance function
$$
\Cov[Y_s,Y_t] = e^{-\omega(s+t)}\Cov[\bar X_s,\bar X_t] = e^{-\omega(s+t)}(\alpha s - e^{-\alpha t}\sinh(\alpha s)) = W(s,t).
$$
Write $\bm{Y} := (Y_{u_1}, \dots, Y_{u_N})'$ and note that since $Y_t$ is a Gaussian process
\begin{align*}
\Ex[Y_t|\bm{Y}=\bm{p}] &= \Ex[Y_t] + \Cov[Y_t,\bm{Y}]\Cov[\bm{Y},\bm{Y}]^{-1}(\bm{p}-\Ex[\bm{Y}]) \\
&= e^{-\omega t} + \Cov[Y_t,\bm{Y}]\boldsymbol{\zeta} \\
&= e^{-\omega t} + \sum_{i=1}^N\Cov[Y_t,Y_{u_i}]\zeta_i \\
&= e^{-\omega t} + \sum_{i=1}^N W(t,u_i)\zeta_i,
\end{align*}
as desired, where we have identified $\bm{\zeta}:=(\zeta_1,\dots,\zeta_N)':=\Cov(\bm{Y},\bm{Y})^{-1}(\bm{p}-\Ex[\bm{Y}])$.
\end{proof}

For the proof of Theorem \ref{thm_signs} we will use the matrix property \emph{total positivity}, see, e.g., \cite{K}. For an $n$-dimensional matrix $\mathbf{A}=(a_{ij})_{i,j\in\{1,\dots,n\}}$, let $\mathbf{A}[\mathcal{I},\mathcal{J}]:=(a_{ij})_{i\in\mathcal{I},j\in\mathcal{J}}$ be the submatrix formed by rows $\mathcal{I}$ and columns $\mathcal{J}$ from $\mathbf{A}$, and let $\mathbf{A}[-i,-j]=(a_{kl})_{k\neq i,l\neq j}$ be the submatrix formed by deleting row $i$ and column $j$.

Recall that a \emph{minor} of order $k$ of a matrix $\mathbf{A}$ is the determinant $\det\mathbf{A}[\mathcal{I},\mathcal{J}]$, where the number of elements in both $\mathcal{I}$ and $\mathcal{J}$ is $k$.

\begin{definition}[Total positivity] A $n\times n$ matrix is said to be \emph{totally positive} (TP) if all its minors of order $k = 1,\ldots,n$ are non-negative.
\end{definition}

For functions of two arguments, say $f(s,t)$, which we call \emph{kernels}, there are parallel definitions of total positivity, viz.\ $f(s,t)$ is said to be totally positive if the matrices $(f(s_i,t_j))_{i,j\in\{1,\dots,n\}}$ are totally positive for all $n$ and $s_1<\cdots<s_n$ and $t_1<\cdots<t_n$.

We use the following notation for determinants of matrices constructed by kernels:
$$
f\begin{pmatrix}s_1 & \dots & s_n \\ t_1 & \dots & t_n \end{pmatrix} := \det\, (f(s_i,t_j))_{i,j\in\{1,\dots,n\}}
$$

\begin{lemma}\label{lemma_tp}
The covariance function $K(s,t)= \alpha^2 e^{-\max(s,t)}\cosh(\alpha \min(s,t))$ of the Ornstein-Uhlenbeck process $\{X_t:t\geq 0\}$ is totally positive.
\end{lemma}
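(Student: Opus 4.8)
The plan is to recognise $K$ as a \emph{Green's-function kernel} and to reduce its total positivity to that of the Brownian-motion covariance kernel. By the definition of total positivity adopted above it is enough to check that, for every $n$ and all strictly increasing tuples $s_1<\dots<s_n$ and $t_1<\dots<t_n$, the determinant $\det\big(K(s_i,t_j)\big)_{i,j=1}^n$ is non-negative, since every minor of $\big(K(s_i,t_j)\big)_{i,j=1}^n$ is again a determinant of exactly this form associated with increasing subtuples. From the covariance computation in the proof of Theorem \ref{thm_repr} one has $K(s,t)=\alpha^2 e^{-\alpha(s\vee t)}\cosh(\alpha(s\wedge t))=\sigma(s\wedge t)\,\tau(s\vee t)$ with $\sigma(x):=\cosh(\alpha x)>0$ and $\tau(x):=\alpha^2 e^{-\alpha x}>0$, and the key structural feature is that $\rho:=\sigma/\tau$ is increasing; indeed $\rho(x)=\tfrac1{2\alpha^2}\big(e^{2\alpha x}+1\big)$. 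The classical fact that a kernel $\sigma(s\wedge t)\tau(s\vee t)$ with $\sigma,\tau>0$ is totally positive whenever $\sigma/\tau$ is non-decreasing (see \cite{K}) then finishes the argument, but I would also give the short self-contained reduction that follows.

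First I would put the kernel into ``minimum form'': a direct case check ($s\le t$ versus $s\ge t$, using that $\rho$ is increasing) gives $K(s,t)=\tau(s)\,\tau(t)\,\min\big(\rho(s),\rho(t)\big)$. Consequently, for fixed increasing $s_i$'s and $t_j$'s,
\[
\det\big(K(s_i,t_j)\big)_{i,j}=\Big(\prod_{i}\tau(s_i)\Big)\Big(\prod_{j}\tau(t_j)\Big)\det\big(\min(a_i,b_j)\big)_{i,j},\qquad a_i:=\rho(s_i),\ b_j:=\rho(t_j),
\]
and since $\rho$ is strictly increasing the $a_i$ and $b_j$ are again strictly increasing, while the two products are positive because $\tau>0$. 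This leaves exactly the assertion that the kernel $(a,b)\mapsto a\wedge b$ on $[0,\infty)$ — the covariance kernel of Brownian motion — is totally positive.

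For that last point I would use the integral representation $a\wedge b=\int_0^\infty \mathbbm{1}_{\{u\le a\}}\mathbbm{1}_{\{u\le b\}}\,du$ together with the basic composition formula (the continuous Cauchy--Binet identity), by which an integral of products of totally positive kernels is totally positive; it then only remains to observe that the order kernel $(u,x)\mapsto\mathbbm{1}_{\{u\le x\}}$ is totally positive, which is a standard example and in any case immediate here, since for increasing $u_i$'s and $x_j$'s the matrix $\big(\mathbbm{1}_{\{u_i\le x_j\}}\big)_{i,j}$ is a $0$--$1$ matrix with a nested ``staircase'' structure and hence has all minors non-negative. The only genuine obstacle is the passage from $2\times2$ to arbitrary $n\times n$ determinants, but the reduction above isolates that difficulty entirely in the total positivity of the Brownian covariance kernel, which may be proved as just indicated or quoted from \cite{K}. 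From here, the total positivity of $H$, and hence of $\bm{W}=\bm{Q}\bm{H}\bm{Q}$, should follow by a further integration step, which is what ultimately drives the sign pattern in Theorem \ref{thm_signs}.
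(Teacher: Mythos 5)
Your proof is correct and takes essentially the same route as the paper: both identify $K(s,t)=f(s\wedge t)\,g(s\vee t)$ with $f,g>0$ and $f/g$ increasing, and invoke the classical criterion that such kernels are totally positive (the paper simply cites \cite[Example I.(f), p.~213]{A} at this point). Your additional self-contained reduction to the Brownian-motion kernel $\min(a,b)$ via the composition formula is sound, but it only unpacks the cited fact rather than constituting a different argument.
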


\begin{proof}[Proof of Lemma \ref{lemma_tp}]
Let $f(t):= \alpha \cosh(\alpha t)$ and $g(t):= \alpha e^{-\alpha t}$. We note that $h(t):= f(t)/g(t) = (1+e^{2\alpha t})/2$ is increasing and can thus apply \cite[Example I.(f), p.\ 213]{A} and conclude that $K$ is totally positive.
\end{proof}

We will need the following observation about continuous non-negative functions.

\begin{lemma}\label{lemma_f_pos}
Let $f:\mathbb{R}^m\to\mathbb{R}$ be a non-negative continuous function and let $A$ be an $m$-dimensional box, which might include some or none of its boundary $\partial A$. If $f$ is positive somewhere on $\partial A$, then $\int_A f(\bm{x})d\bm{x}$ is positive.
\end{lemma}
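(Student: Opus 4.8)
The plan is to reduce the statement to a one-dimensional claim along a suitable line segment and then exploit continuity. Suppose $f$ is positive at a point $\bm{x}_0 \in \partial A$, say $f(\bm{x}_0) = c > 0$. By continuity of $f$ there is a radius $\delta > 0$ such that $f(\bm{x}) \ge c/2$ for all $\bm{x}$ with $|\bm{x} - \bm{x}_0| < \delta$. The key geometric observation is that since $\bm{x}_0$ lies on the boundary of the box $A$, the intersection $B(\bm{x}_0,\delta) \cap A$ has positive $m$-dimensional Lebesgue measure: a box is a product of intervals, $\bm{x}_0$ being in its closure means each coordinate of $\bm{x}_0$ lies in the closure of the corresponding interval, and a small ball around such a point always overlaps the open box in a set of positive volume (concretely, one can fit a small sub-box of $A$ into $B(\bm{x}_0,\delta)$ with one corner approaching $\bm{x}_0$).

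With that in hand the estimate is immediate:
\[
\int_A f(\bm{x})\,d\bm{x} \;\ge\; \int_{A \cap B(\bm{x}_0,\delta)} f(\bm{x})\,d\bm{x} \;\ge\; \frac{c}{2}\,\vol\bigl(A \cap B(\bm{x}_0,\delta)\bigr) \;>\; 0,
\]
where the first inequality uses non-negativity of $f$ on all of $A$, the second uses the lower bound on the ball, and the third uses the positive-volume observation. Whether or not $A$ contains parts of $\partial A$ is irrelevant, since removing a boundary set changes neither the integral nor the volume.

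The only step requiring a little care — and the one I would flag as the main (mild) obstacle — is the geometric claim that $\vol(A \cap B(\bm{x}_0,\delta)) > 0$ whenever $\bm{x}_0 \in \partial A$. One clean way to see it: write $A = \prod_{k=1}^m I_k$ with $I_k$ an interval (of positive length, else the statement is vacuous as $\partial A = \overline{A}$ and the integral is $0$, so we may assume each $I_k$ nondegenerate), and let $a_k = (\bm{x}_0)_k \in \overline{I_k}$. For each $k$ pick a subinterval $J_k \subseteq I_k$ of positive length lying within distance $\delta/(2\sqrt{m})$ of $a_k$ (possible because $a_k$ is in the closure of the nondegenerate interval $I_k$). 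Then $\prod_k J_k \subseteq A \cap B(\bm{x}_0,\delta)$ has volume $\prod_k |J_k| > 0$. Everything else is routine, so I expect the write-up to be short.
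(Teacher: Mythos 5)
Your argument is correct and follows essentially the same route as the paper's proof: continuity gives a ball around $\bm{x}_0$ on which $f$ is bounded below by a positive constant, the intersection of that ball with the box has positive volume, and non-negativity of $f$ elsewhere finishes the estimate. The only difference is that you spell out the positive-volume claim via an explicit sub-box $\prod_k J_k$, which the paper simply asserts; that is a reasonable addition but not a different approach.
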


\begin{proof}[Proof of Lemma \ref{lemma_f_pos}]
By assumption, there exists an $\epsilon>0$ and an $\bm{x}_0\in\partial A$ such that $f(\bm{x}_0)>2\epsilon>0$. Since $f$ is continuous, there exists a $\delta>0$ such that $f(\bm{x})>\epsilon$ for all $\bm{x}\in B_{\delta}(\bm{x}_0):=\{\bm{x}\in\mathbb{R}^m: |\bm{x}-\bm{x}_0|<\delta\}$. Since the intersection between the ball $B_{\delta}(\bm{x}_0)$ and $A$ has positive $m$-dimensional volume and $f$ is non-negative in general, we get
$$
\int_A f(\bm{x})d\bm{x} = \int_{A \cap B_{\delta}(\bm{x}_0)} f(\bm{x})d\bm{x} + \int_{A \setminus B_{\delta}(\bm{x}_0)} f(\bm{x})d\bm{x} > \epsilon \int_{A \cap B_{\delta}(\bm{x}_0)} 1 d\bm{x} + 0 > 0
$$
\end{proof}

We are now ready to prove Theorem \ref{thm_signs}.

\begin{proof}[Proof of Theorem \ref{thm_signs}]
Let us fix $t\equiv u_{N+1}>u_N$. We note that the sign of the regression coefficient $\beta_i$ equals that of the partial correlation coefficient of $Y_{u_{N+1}}$ and $Y_{u_i}$ given all other $Y_{u_j}$, $j\in\{1,\dots,N\}\setminus\{i\}$. This partial correlation coefficient equals that of $\bar X_{u_{N+1}}$ and $\bar X_{u_i}$ given all other $\bar X_{u_j}$, since the two processes have the same correlation matrix. Let us call this partial correlation coefficient $p_{i,N+1}$, and let $\bar{\bm{X}} := (\bar{X}_{u_1},\dots, \bar{X}_{u_{N+1}})'$.

The partial correlation coefficient has the opposite sign to the element on row $i$ and column $N+1$ in the inverse of the covariance matrix $\mathbf{H}:=\Cov[\bar{\bm{X}},\bar{\bm{X}}]$: With $\mathbf{B} = (b_{ij})_{i,j=1,\dots,N+1} := \mathbf{H}^{-1}$, $\mathrm{sign}(p_{i,N+1})=-\mathrm{sign}(b_{i,N+1})$.

By Cramer's rule
$$
b_{i,N+1} = (-1)^{i+N+1}\frac{\det \mathbf{H}[-(N+1),-i]}{\det \mathbf{H}} = (-1)^{N+1-i}\frac{\det \mathbf{H}[-i,-(N+1)]}{\det \mathbf{H}},
$$
since $\mathbf{H}$ is a covariance matrix and thus symmetric. The determinant of $\mathbf{H}$ is positive and therefore
$$
\mathrm{sign}(p_{i,N+1}) = -\mathrm{sign}(b_{i,N+1}) = (-1)^{N-i}\mathrm{sign}( \det \mathbf{H}[-i,-(N+1)])
$$
We thus need to prove that $\det \mathbf{H}[-i,-(N+1)] > 0$ for $i=1,\dots,N$. With $H(s,t)$ being the covariance function of $\bar X_t$, we have
$$
\det \mathbf{H}[-i,-(N+1)] = H\begin{pmatrix} u_1 & \dots & u_{i-1} & u_{i+1} & \dots & u_{N+1} \\ u_1 & \dots & u_{i-1}     & u_i     & \dots & u_N    \end{pmatrix}.
$$
We can write the kernel $H$ as a double integral of the kernel $K$:
$$
H(s,t) = \iint\limits_{\atop{0\leq v\leq s}{0\leq w\leq t}} K(v,w)dvdw = \iint\limits_{\atop{0\leq v}{0\leq w}}\underbrace{\mathbbm{1}\{s\geq v\}}_{=:L(s,v)}K(v,w)\underbrace{\mathbbm{1}\{w\leq t\}}_{=:R(w,t)}dvdw.
$$
The kernel $L$ produces matrices that have ones below a diagonal, and zeros above. The kernel $R$ is similar with ones above a diagonal. Their determinants therefore equal either zero or one, and they equal one only if the ones are on the main diagonal of the matrix.
\begin{align*}
L\begin{pmatrix} s_1 & \dots & s_n \\t_1 & \dots & t_n \end{pmatrix}&=\mathbbm{1}\{t_1\leq s_1 < t_2 \leq s_2 < \cdots < t_n \leq s_n\} \\
R\begin{pmatrix} s_1 & \dots & s_n \\t_1 & \dots & t_n \end{pmatrix}&=\mathbbm{1}\{s_1\leq t_1 < s_2 \leq t_2 < \cdots < s_n \leq t_n\}
\end{align*}

We use this in the continuous version of the Cauchy-Binet formula \cite[Eq.\ (3.1.2)]{K} and obtain
\begin{align}
H&\begin{pmatrix} u_1 & \dots & u_{i-1} & u_{i+1} & \dots & u_{N+1} \\
                  u_1 & \dots & u_{i-1} & u_i     & \dots & u_N    \end{pmatrix} = \notag\\
&= \idotsint\limits_{\atop{0\leq v_1<\cdots<v_N}{0\leq w_1<\cdots<w_N}}
L\begin{pmatrix} u_1 & \dots & u_{i-1} & u_{i+1} & \dots & u_{N+1} \\
                 v_1 & \dots & v_{i-1} & v_i     & \dots & v_N    \end{pmatrix}
H\begin{pmatrix} v_1 & \dots & v_N \\
                 w_1 & \dots & w_N \end{pmatrix}
R\begin{pmatrix} w_1 & \dots & w_N \\
                 u_1 & \dots & u_N \end{pmatrix}d\bm{v}d\bm{w} \notag\\
&= \idotsint\limits_{A}H\begin{pmatrix} v_1 & \dots & v_N \\w_1 & \dots & w_N \end{pmatrix}d\bm{v}d\bm{w}, \label{cb_int}
\end{align}
where 
\begin{align*}
A := \{&0\leq v_1 \leq u_1 < \dots < v_{i-1}\leq u_{i-1} <v_i\leq u_{i+1} < \cdots < v_N \leq u_{N+1},\\
&0\leq w_1 \leq u_1 < \cdots < w_N \leq u_N\}
\end{align*}
Since $K(s,t)$ is a continuous function, and the determinant $\det (a_{ij})_{i,j=1,\dots,n}$ is a continuous function of all elements $a_{ij}$, $i,j = 1,\dots,n$, the determinant $K(\begin{smallmatrix}s_1 & \dots & s_n\\t_1 & \dots & t_n \end{smallmatrix})$ is continuous as a function of $s_1\dots,s_n,t_1,\dots,t_n$. By Lemma \ref{lemma_tp} we also know that $K(\begin{smallmatrix}s_1 & \dots & s_n\\t_1 & \dots & t_n \end{smallmatrix})$ is a non-negative function. Furthermore we have $K(\begin{smallmatrix}u_1 & \dots & u_N\\u_1 & \dots & u_N \end{smallmatrix})>0$ since this is the determinant of the covariance matrix $\Cov[\bm{X},\bm{X}]$ where $\bm{X}:=(X_{u_1},\dots,X_{u_N})'$. We finally note that the point
$$
(u_1,\dots,u_N,v_1,\dots,v_N):= (u_1,\dots,u_N,u_1,\dots,u_N) \in \partial A,
$$
and by Lemma \ref{lemma_f_pos} we conclude that the integral \eqref{cb_int} is positive.
\end{proof}

\section{Discussion and concluding remarks}

\subsection{What are we hedging?}

The hedging described in Section \ref{hedging} corresponds to a `perfect' hedge of a deterministic cash flow w.r.t.\ to arbitrary shifts of the observed market rates. In many situations it is common to make simpler hedges w.r.t.\ parallel shifts of the interest rate curve. This corresponds to using modified duration as measure of interest rate risk. This is however something that should be done with caution when it comes to the Smith-Wilson method, since we know from Section \ref{parallel} that a parallel shift of a Smith-Wilson curve is inconsistent with the method itself. Moreover, from the definition of the Smith-Wilson method it is evident that the {\it entire} yield curve will depend on all market observations $\bm{p}$, or equivalently $\bm{r}$. If one still want to use a single number to describe the interest rate risk, such as modified duration, we argue that it is better to assess the modified duration w.r.t.\ to the actual market rates. That is, consider the modified duration of a Smith-Wilson discounted cash flow with respect to the {\it underlying observed spot rates}. It is straightforward to obtain expression for this quantity under Smith-Wilson discounting (calculations not included): let $\mathbf{c} = (c_1,\ldots,c_k)'$ denote the cash flow at maturities $\bm{t} = (t_1,\ldots,t_k)'$, where some of the $t_i$'s may coincide with the observed points $u_i$, then the modified duration with respect to the underlying observed spot rates, $\mathit{D}_{\mathit{sw}}(\bm{c};\bm{r})$, is given by
\begin{align}\label{mod_sw}
\mathit{D}_{\mathit{sw}}(\bm{c};\bm{r}) :&= -\frac{1}{P^c(\bm{r})}\lim_{\delta\to 0}\frac{P^c(\bm{r}+\delta) - P^c(\bm{r})}{\delta} \nonumber\\
&=\frac{\sum_{i=1}^k c_iW(t,\bm{u})\mathbf{W}^{-1}\bm{u}}{\sum_{i=1}^k c_i\left(e^{-\omega t_i} + W(t,\bm{u})\mathbf{W}^{-1}(\bm{p}-\bm{q})\right)}.
\end{align}
Note that the calculations leading up to \eqref{mod_sw} does not include a re-calibration of $\alpha$. By using \eqref{mod_sw} one get an understanding of the interest rate sensitivity of $\bm{c}$ with respect to infinitesimal parallel shifts of the underlying observed market rates, which can be used for hedging purposes. Regarding the inappropriateness of parallel shifts of a Smith-Wilson curve, we again refer the reader to Section \ref{parallel}.

\subsection{On the parametrisation of the Smith-Wilson method, choice of kernel functions and related topics}

From Section \ref{SW_process_interpretation} we know that the Smith-Wilson method can be interpreted as the expected value of a certain Gaussian process conditional on a number of {\it perfect} observations. An interesting observation is that Theorem \ref{thm_repr}, and especially Equation \eqref{P_linj}, imply that one can replace the Wilson-kernel function by {\it any} proper covariance function and still keep the process interpretation of Theorem \ref{thm_repr}. One can note that this is in fact close to what is done in the original paper \cite{SW} by Smith and Wilson where they start with the following general bond price model
\begin{align}
P(t) &:= e^{-\omega t} + \sum_{j=1}^N\zeta_jK_j(t),\label{SW_allm}
\end{align}
where $K_j(t), i = 1,\ldots,N$, denotes an arbitrary kernel function evaluated in $(t,u_j)$. Given the above process interpretation the $K_j(t)$ functions should correspond to a proper covariance function evaluated in the points $(t,u_j)$, which brings us into the realm of {\it kriging}: 

Within the area of geostatistics, there is a theory known as kriging, see e.g.\ \cite{Cr}, which in many aspects resembles the interpretation of the Smith-Wilson method as an expected value of a stochastic process. The method of kriging was introduced in {\it spatial} statistics and can be described as follows: you start by observing a number of outcomes from an unknown stochastic process (for our needs one-dimensional outcomes ordered in time), that is you assume that you make perfect observations at known locations (time points). Given these observations you want to inter/extrapolate between these known points by making assumptions on the underlying process that you have observations from. In light of this the Smith-Wilson method can be seen as one-dimensional kriging, that is Theorem \ref{thm_repr} gives us the Best Linear Unbiased Predictor (BLUP), given that we treat the theoretical unconditional expected values as {\it known a priori}.

If we return to the choice of kernel function proposed in \cite{SW}, the Wilson-function, this choice is motivated with that the Wilson-function is optimal with respect to certain regularity conditions it imposes on models of the form given by \eqref{SW_allm}. From a kriging perspective, one would instead discriminate between covariance functions by comparing the Mean Squared Error of Prediction (MSEP), where the MSEP is the conditional covariance that companions the conditional expectation given by Theorem \ref{thm_repr}. Hence, by giving the Smith-Wilson method a statistical interpretation it is possible to analyse and compare covariance functions statistically, not only w.r.t.\ MSEP.

Given the kriging representation one might be tempted to use MSEP as an alternative for calibrating $\alpha$ in the Wilson-function by e.g.\ minimising $\alpha$ at the $\cp$. This is not feasible, since the MSEP for the Wilson-function w.r.t.\ $\alpha$ lack a minimum for $\alpha>0$: by definition the MSEP is non-negative, that is
\[
\Var[Y_t] - \Cov[Y_t,\bm{Y}]\Cov[\bm{Y},\bm{Y}]^{-1}\Cov[Y_t,\bm{Y}]' > 0.
\]
Moreover, the second term is non-negative since $\Cov[\bm{Y},\bm{Y}]^{-1}$ is positive definite, and thus
\[
0 < \Var[Y_t] - \Cov[Y_t,\bm{Y}]\Cov[\bm{Y},\bm{Y}]^{-1}\Cov[Y_t,\bm{Y}]' < \Var[Y_t] \to 0,
\]
as $\alpha\to 0$, regardless of the value of $t$.

\subsection{A hedgeable Smith-Wilson method}\label{sw_hedgeable}

From Theorem \ref{thm_signs} we know that the choice of Wilson-function as kernel will result in an oscillating hedge. If we would consider changing to another kernel (or covariance)-function we want to avoid re-inventing a new un-hedgeable procedure. From the proof of Theorem \ref{thm_signs} it follows that in order for all the $\beta$'s to be positive we need conditions on the determinant of the covariance matrix associated with the covariance function. In \cite{KR} it is ascertained that all $\beta$'s will be non-negative if the inverse of the covariance matrix, i.e.\ the precision matrix, is a so-called M-matrix, see result (b) on p.\ 420 in \cite{KR}, where the definition of a M-matrix is as follows:
\begin{definition}[M-matrix, see e.g.\ \cite{KR}]
A real $n\times n$ matrix $\mathbf{A}$ s.t.\ $\mathbf{A}_{ij} \le 0$ for all $i\neq j$ with $\mathbf{A}_{ii} > 0$ is an M-matrix iff one of the following holds
\begin{enumerate}[(i)]
\item There exists a vector $\mathbf{x} \in \mathbb{R}^n$ with all positive components such that $\mathbf{Ax} > 0$.
\item $\mathbf{A}$ is non-singular and all elements of $\mathbf{A}^{-1}$ are non-negative.
\item All principal minors of $\mathbf{A}$ are positive.
\end{enumerate}
\end{definition}
From the definition of an M-matrix it follows that if the precision matrix is an M-matrix, then all components in the underlying stochastic structure will be positively associated, see result (e) on p.\ 421 in \cite{KR}. Since positive association is something observed for interest rates in practice, there might be hope to find a reasonable structure that could be used as an alternative to the Wilson-function, if wanted. Moreover, the conditions that needs to be checked are given by Theorem 8 in \cite{J}, which gives necessary and sufficient conditions under which an inverse M-matrix can be expanded and still remain an inverse M-matrices.

A simple example of a process that has an M-matrix as a precision matrix is an Ornstein-Uhlenbeck process, i.e.\ if the kernel underlying the Smith-Wilson method had been that of an Ornstein-Uhlenbeck process rather than an integrated Ornstein-Uhlenbeck, then the hedge would not be oscillating. If that choice is appropriate in other respects is beyond the scope of this paper.

\subsection{Concluding remarks}

In the present paper we have provided an alternative stochastic representation of the Smith-Wilson method (Theorem \ref{thm_repr}). This representation has {\it nothing} to do with the original derivation of the Smith-Wilson method, but it provides one stochastic representation of the method. Further, above it has been shown that the stochastic representation may be useful for interpreting and analysing the Smith-Wilson method. In particular we have shown that the method {\it always} will result in oscillating hedges w.r.t.\ to the underlying supporting market spot rates (Theorem \ref{thm_signs}). This a highly undesirable feature of the method.

Further, we have given an example where the resulting Smith-Wilson discount curve will take on negative values while fulfilling the convergence criterion, a situation which is total nonsense, but previously known to be able to occur, see e.g.\ \cite{E2,FT}. In the present paper we extend this example to show that there may also occur singularities in the convergence criterion itself and analytically show that this is a direct consequence of the occurrence of negative discount factors. That is, given that there are negative discount factors for some $\alpha \ge 0.05$ there will exist a singularity in the domain where $\alpha$ is being optimised. This is due to the fact that the $f(\cp) - \omega$ will be smaller than 0 for these values of $\alpha$, but for large enough values of $\alpha$ the discount function at $\cp$ will be positive; a situation occurring irrespective of whether or not the converged resulting Smith-Wilson curve will give rise to negative discount factors or not.

Moreover, we also provide necessary and sufficient conditions under which a change of kernel or covariance function will result in a bond price model that does not inherit the oscillating hedge behaviour, i.e.\ its inverse is an M-matrix. This is a nice feature, since given that you want an affine bond price model, you do not need to redo the entire functional analytical optimisation that initially lead up to the original Smith-Wilson method, but can merely change the kernel function and check the necessary conditions. One obvious drawback with this approach is that you from a functional analytic perspective do not know which utility function that you are optimising. The perhaps simplest process that fulfils the M-matrix criterion is the standard Ornstein-Uhlenbeck process. One can also note that the standard Ornstein-Uhlenbeck process has more than one free parameter, which intuitively may be beneficial from a modelling perspective.

We have also commented on the straightforward connection between the Smith-Wilson method and so-called kriging. This was another way to, probabilistically, be able justify different choices of alternative kernels. If one is interested in this topic, one can note that kriging in itself is a special case of so-called Bayesian non-parametrics and that kriging under certain conditions is closely connected to spline smoothing \cite{C}.

\section*{Acknowledgments}

The second author is grateful to Håkan Andersson for the discussions following the joint work \cite{AL}.

\appendix

\end{document}